\date{}
\theoremstyle{plain}
\newtheorem{theorem}{Theorem}[section]
\newtheorem{lemma}[theorem]{Lemma}
\theoremstyle{definition}
\theoremstyle{definition}
\newtheorem{remark}[theorem]{Remark}
\numberwithin{equation}{section}
\newcommand{\RR}{\mathrm{I\kern-0.20emR}}
\newcommand{\E}{\mathrm{e}\kern0.2pt}%
\def\div{\textrm{\rm div }}
\def\curl{\textrm{\rm curl }}
\newcommand{\ii}{\kern0.05em\mathrm{i}\kern0.05em}%
\newcommand{\D}{\mathrm{d}\kern0.2pt}%
\DeclareMathOperator{\BS}{BS}
\newcommand{\be}{\begin{equation}}
\newcommand{\ee}{\end{equation}}
\newcommand{\dom}{\Omega}
\newcommand{\surf}{\partial\Omega^t}
\newcommand{\bottom}{\partial\Omega^b}
\newcommand{\boundary}{\partial\Omega}
\newcommand{\op}[1]{\skew{4}\hat{#1}}
\renewcommand{\v}[1]{\ensuremath{{\boldsymbol #1}}} 
\newcommand{\gv}[1]{\ensuremath{\mbox{\boldmath$ #1 $}}}
\newcommand{\nn}[0]{\ensuremath{\boldsymbol{n}}}
\newcommand{\ww}[0]{\ensuremath{\boldsymbol{w}}}
\newcommand{\uu}[0]{\ensuremath{\boldsymbol{u}}}
\newcommand{\XT}[1]{\ensuremath{X_T^{#1}(\Omega)}}
\newcommand{\Sx}[0]{\ensuremath{\v{S}_X}}
\newcommand{\Sy}[0]{\ensuremath{\v{S}_Y}}
\newcommand{\BB}[0]{{\ensuremath{\boldsymbol{\cal B}}}}
\renewcommand{\AA}[0]{\ensuremath{\boldsymbol{A}}}
\renewcommand{\div}[1]{\nabla \cdot #1} 
\renewcommand{\curl}[1]{\nabla \times #1} 
\newcommand{\FF}[0]{\ensuremath{\v{F}}} 
\title{A variational principle for three-dimensional water waves over Beltrami flows}
\author{
	E.~Lokharu\thanks{Centre for Mathematical Sciences, Lund University, PO Box 118, 22100 Lund, Sweden}
	\and E.~Wahl\'en\footnotemark[1]
}
\begin{document}

	\maketitle
	\abstract{ We consider steady three-dimensional gravity-capillary water waves with vorticity propagating on water of finite depth. We prove a variational principle for doubly periodic waves with relative velocities given by Beltrami vector fields, under general assumptions on the wave profile.}

\section{Introduction}

This paper is concerned with  three-dimensional steady water waves  driven by gravity and surface tension. Almost all previous investigations of such waves have worked under the assumption of irrotational flow. In this paper, on the other hand, we allow for non-zero vorticity. This could be important for modelling three-dimensional interactions of waves with non-uniform currents. While our study is limited to Beltrami fields, even this particular case is a step forward compared to the previous state of knowledge.
 The fluid domain $\Omega \subset \RR^3$ is assumed to be an open, simply connected set, bounded from below by a rigid flat bottom $\bottom=\{z=-d\}$ and from above by a free surface $\surf$, separating the fluid from the air.  Let $\uu\colon\overline{\Omega} \to \RR^3$ be the (relative) velocity field and $p\colon \overline{\Omega} \to \RR$ the pressure. In a moving frame of reference, the fluid motion is governed by the steady Euler equations
 \begin{subequations}
\begin{align}
& (\uu \cdot \nabla) \uu = - \nabla p - g \v{e}_3 && \text{in} \ \ \Omega, \label{Euler1} \\
& \div{\uu} = 0 && \text{in} \ \ \Omega, \label{Euler2} \\
\intertext{with kinematic boundary condition on the top and bottom boundaries}
& \uu \cdot \v{n} = 0 && \text{on} \ \ \partial \Omega, \nonumber \\
\intertext{and dynamic boundary condition on the free surface}
& p=-2\sigma K_M  && \text{on} \ \ \partial\Omega^t.  \nonumber
\end{align}
\end{subequations}
Here $\v{e}_3 = (0,0,1)$ and $K_M$ is the mean curvature of the free surface defined by
$2K_M=-\div  \v{n}$ where $\v{n}$ is the unit outward normal, while $\sigma > 0$ is the surface tension coefficient.

 In the classical situation the free surface is given by the graph of a function, which excludes overhanging wave profiles. In this paper we consider a more general geometry as e.g~in \cite{AR, BB, BT}, by defining the fluid domain as
\[
\Omega = \v{F}(D), \ \ D = \RR^2 \times (-d,0)
\]
for some $d>0$ and a map $\v{F} \colon \RR^3 \to \RR^3$ satisfying the following conditions:
\begin{itemize}
	\item[(F1)] $\v{F}\colon\RR^3 \to \RR^3$ is a diffeomorphism with bounded partial derivatives and $\det{\textrm{D}\v{F}} > 0$;
	\item[(F2)] $\v{F}(X,Y,-d) = (X,Y,-d)$ for all $X,Y \in \RR$.
\end{itemize}
Thus, the free surface is given by
\[
\partial \Omega^t = \{\FF(X,Y,0): (X,Y) \in \RR^2\},
\]
while the flat bottom is
\[
\partial \Omega^b = \{ (X,Y,-d): (X,Y) \in \RR^2 \}.
\]
In what follows we will use the notation
\[
\v{S}(X,Y) = \FF(X,Y,0)
\]
for the surface parametrization. Our assumptions allow overhang but exclude self-intersection and they imply that $\partial \Omega^t$ lies above $\partial \Omega^b$

The set of all $\v{F} \in C_{loc}^{3,\gamma}(\RR^3;\RR^3)$ satisfying (F1) and (F2)  will be denoted by $M$.
Here $C^{k, \gamma}(U)$, with $k\in \mathbb N_0=\{0,1,2,\ldots\}$ and $\gamma\in (0,1)$, denotes the class of $k$ times continuously differentiable functions whose partial derivatives of order less than or equal to $k$ are bounded and uniformly $\gamma$-H\"older continuous. The notation $C_{loc}^{k, \gamma}(U)$ will be used for the space of functions satisfying this condition in a neighbourhood of each point in $U$. Throughout the rest of the paper, we will continuously extend functions in $C^{k,\gamma}(U)$ to the boundary of $U$ without explicit mention.

We will consider doubly-periodic waves as follows.
Let
\begin{equation} \label{lattice} \nonumber
\Lambda = \{ \gv{\lambda} = l \gv{\lambda}_1 + j \gv{\lambda}_2 : l,j \in \mathbb{Z} \},
\end{equation}
be a two-dimensional latticegenerated by two linearly independent vectors $\gv{\lambda}_1,\gv{\lambda}_2 \in \RR^2$, and let
\[
B = \{a_1 \gv{\lambda}_1 + a_2 \gv{\lambda}_2 : a_1,a_2 \in [0,1] \}
\]
be a two-dimensional periodic cell in the lattice. We will assume that $\v{F}(\v{x})-\v{x}$ is periodic with respect to the lattice, so that
\[
\v{F}(\v{X}'+\gv{\lambda},Z)=\v{F}(\v{X}',Z) +(\gv{\lambda},0), \ \ \v{S}(\v{X}'+\gv{\lambda})
= \v{S}(\v{X}') + (\gv{\lambda},0)
\]
for all $\gv{\lambda} \in \Lambda$ and $\v{X}' = (X,Y) \in \RR^2$, and we denote by $M_{per}$ the set of all $\v{F} \in M$ satisfying this property.
We will consider periodic solutions, meaning that
\[
\uu(\v{x}'+\gv{\lambda},z)=\uu(\v{x}',z),  \quad p(\v{x}'+\gv{\lambda},z)=p(\v{x}', z)
\]
for all $\gv{\lambda}\in \Lambda$.

In the irrotational case, when $\curl{\uu} = 0$ everywhere in $\Omega$, there are several existence results for different types of three-dimensional waves, including doubly-periodic waves, fully localized solitary waves and waves with a solitary-wave profile in one horizontal direction and periodic or quasi-periodic profile in another
(see e.g.~\cite{BGSW, BGW, CN, GH, GM, GS, IP1, IP2, RS} and references therein).

On the the other hand, the existence of genuinely three-dimensional water waves with vorticity is completely open, except for a non-existence result for water waves with constant vorticity \cite{Wahlen14}.
Even in the absence of a free surface, the literature concerning steady rotational flows with vorticity is pretty scarce. There are only a handful of general existence results for steady flows with vorticity in fixed domains \cite{A, BW, TX}. However, the special case when the velocity and vorticity fields are collinear, that is,
\begin{align}
\curl{\uu} = \alpha \uu \quad \text{in} \ \ \Omega, \label{Beltrami} \nonumber
\end{align}
for some scalar function $\alpha$, has received more attention. Such vector fields are known as Beltrami vector fields or force-free fields and are well-known in solar and plasma physics (see e.g.~\cite{Freidberg, Priest}).
Any divergence-free Beltrami field generates a  solution to the Euler equation \eqref{Euler1} with  pressure given by
\[
p = C - \frac{|\uu|^2}{2} - g z.
\]
In general, condition \eqref{Euler2} is satisfied if $\alpha$ is constant along the streamlines of $\uu$. In this paper we will however concentrate on the case when $\alpha$ is constant throughout the whole fluid. Such fields are often called strong Beltrami fields or linear force-free fields.
The theory for strong Beltrami fields is much more developed than for Beltrami fields with variable $\alpha$ (see e.g.~the discussions in \cite{KNW} and \cite{BA}) and in fact an obstruction to finding fields with variable $\alpha$ was recently discovered in \cite{EPS3}.
In the following, we shall simply take Beltrami fields to mean strong Beltrami fields.
Beltrami fields are intimately connected with chaotic motion, the famous ABC flow \cite{AK} being a classical example. In \cite{EPS1} it was shown that any locally finite link can be obtained as a collection of streamlines of some Beltrami field and in \cite{EPS2} a similar result was shown for vortex tubes. Note that linear dependence of $\uu$ and $\curl{\uu}$ is in some sense necessary for chaotic behaviour by a theorem of Arnold \cite{AK}.

For a Beltrami field the governing equations are
\begin{subequations}
	\begin{align}
	&\div{\uu} = 0  \ \ && \text{in} \ \ \Omega, \label{B1} \\
	&\curl{\uu} = \alpha \uu  \ \ && \text{in} \ \ \Omega, \label{B2} \\
	&\uu \cdot \v{n} = 0  \ \  && \text{on} \ \ \partial\Omega,  \label{B3} \\
&\frac 1 2 |\uu|^2 + g z - 2\sigma K_M  = \text{const} && \text{on} \ \ \partial\Omega^t. \label{B4}
	\end{align}
\end{subequations}
The aim of the present paper is to find a variational formulation (Theorem \ref{MainTh}) for this problem in the periodic case. Classical and modern variational formulations \cite{CSS, L, Wahlen07} have proved useful in a variety of existence and stability theories for periodic and solitary travelling water waves. This includes two- and three-dimensional waves in the irrotational setting (see the references mentioned above)  as well as two-dimensional waves with vorticity \cite{BB, BT, GrovesWahlen07, GW1}. It is therefore natural to expect that a variational principle for three-dimensional waves over Beltrami flows could be useful.
In the absence of a free boundary, there is a classical variational formulation by Woltjer \cite{W1} which was further developed by Laurence \& Avellaneda \cite{AL} (see also the related formulation by Chandrasekhar \& Woltjer \cite{CW}). It states that Beltrami fields are critical points of the energy subject to the constraint of fixed helicity.
The presence of the free boundary requires some nontrivial modifications of this formulation, as does the different geometric setting. The first step is to construct vector potentials satisfying certain boundary conditions (Theorem \ref{VectPot}). The variational principle (Theorem \ref{MainTh}) is then formulated in terms of such potentials.
In our presentation we have striven for a balance between rigor and simplicity. The variational formulation is presented in a mathematically rigorous fashion in terms of certain function spaces, but we have tried not to overemphasize technical details. The choice of variational formulation in  Theorem \ref{MainTh} is certainly not unique. We give some comments about this after the proof of the theorem,  which could be useful for a variational existence theory. In addition, it would also have been possible to use other function spaces, such as Sobolev spaces.

\section{Vector potentials}

A vector potential of $\uu$ is a vector field $\AA$ such that
\[
\curl{\AA} = \uu.
\]
Such a potential is not unique since we can add to it the gradient of any smooth function $\phi$.
In order to derive a variational principle for Beltrami flows, we need to examine the structure of vector potentials for periodic  vector fields satisfying $\uu\cdot \v{n}=0$ on $\partial \Omega$.
For $k \in \mathbb{N}_0$ and $\gamma \in (0,1)$ we put
\begin{align*}
& X^k(\Omega) = C^{k,\gamma}_{per}(\Omega; \RR^3), \\
& X^k_N(\Omega) = \{\v{v}\in X^k(\Omega): \v{v} \times \v{n} = 0 \  \text{on} \ \partial \Omega \}, \\
& X^k_T(\Omega) = \{\v{v}\in X^k(\Omega): \v{v} \cdot \v{n} = 0 \  \text{on} \  \partial \Omega \}.
\end{align*}
The subscript \textit{per} stands for periodicity with respect to the lattice $\Lambda$.
We will also need the divergence-free analogues of the above spaces, defined by
\[
Y^k(\Omega) = \{ \v{v}\in X^k(\Omega): \div{\v{v}} = 0 \  \text{in} \  \Omega \}
\]
and
\begin{align*}
& Y^k_N = Y^k(\Omega)\cap X^k_N(\Omega), \\
& Y^k_T =Y^k(\Omega)\cap X^k_T(\Omega).
\end{align*}
 When $\uu \in Y^k_T$ has zero fluxes, there is a unique vector potential from $Y^{k+1}_N$, $k \geq 2$. This is no longer true for non-zero fluxes. However, one can prove the following statement.

\begin{theorem}\label{VectPot} For any vector field $\uu \in Y_T^1(\Omega)$ there exists a vector potential $\AA \in Y^{2}(\Omega)$ such that
\begin{align}
& \AA \times \nn = 0 \ \ \text{on} \ \ \partial\Omega^t, \label{VP1} \\
& \AA \times \nn = (m_1,m_2,0) \ \ \text{on} \ \ \partial\Omega^b, \label{VP2}
\end{align}
for some constants $m_1,m_2 \in \RR$ determined by $\uu$. On the other hand, if $\AA \in Y^{2}(\Omega)$ satisfies \eqref{VP1} and \eqref{VP2}, then $\curl{\AA} \in Y_T^1(\Omega)$.
\end{theorem}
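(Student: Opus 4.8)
I would prove the converse first, as it reduces to a boundary computation. Suppose $\AA \in Y^2(\Omega)$ satisfies \eqref{VP1}--\eqref{VP2} and set $\uu = \curl{\AA}$. Then $\uu \in C^{1,\gamma}_{per}(\Omega;\RR^3)$ is periodic and $\div{\uu} = \div{(\curl{\AA})} = 0$ automatically, so the only point is $\uu \cdot \nn = 0$ on $\partial\Omega$. For this I would invoke the surface identity $\nn \cdot \curl{\AA} = \mathrm{div}_{\partial\Omega}(\AA \times \nn)$, valid on any $C^2$ surface, with $\mathrm{div}_{\partial\Omega}$ the tangential divergence. On $\partial\Omega^t$ the trace $\AA \times \nn$ vanishes, and on $\partial\Omega^b$ it is the constant $(m_1,m_2,0)$; in both cases its surface divergence is zero, whence $\uu \cdot \nn = 0$ on all of $\partial\Omega$ and $\uu \in Y^1_T(\Omega)$.

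For the forward direction I would pass to the quotient $P = \Omega/\Lambda$, a compact $C^{3,\gamma}$ manifold with boundary that is topologically $\mathbb{T}^2 \times (-d,0)$ with two boundary tori. Encode $\uu$ as the $2$-form $\omega_{\uu} = \iota_{\uu}\,\mathrm{vol}$: here closedness is $\div{\uu}=0$, while $\uu\cdot\nn = 0$ says the pullback of $\omega_{\uu}$ to $\partial P$ vanishes. The first task is to produce \emph{one} potential $\AA_0 \in Y^2(\Omega)$ with $\curl{\AA_0}=\uu$, i.e. to solve $d\beta = \omega_{\uu}$. The sole obstruction is the class $[\omega_{\uu}] \in H^2(P) \cong \RR$, computed by integrating over the top torus, and it vanishes precisely because $\int_{\partial\Omega^t/\Lambda} \uu\cdot\nn = 0$. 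Concretely I would obtain $\AA_0$ by pulling the problem back to the flat slab through $\FF$, where either an explicit integration in the vertical variable or an elliptic Hodge--Laplace solve is transparent, and pushing forward; Schauder estimates supply the gain from $\uu \in C^{1,\gamma}$ to $\AA_0 \in C^{2,\gamma}$, and a gauge correction restores $\div{\AA_0}=0$.

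It remains to fix the boundary traces using the gauge freedom that preserves both $\curl{\AA_0}=\uu$ and $\div{\AA_0}=0$, namely $\AA_0 \mapsto \AA_0 + c_1\v{h}_1 + c_2\v{h}_2 + \nabla\chi$ with $\v{h}_1,\v{h}_2$ spanning the curl-free, divergence-free fields tangent to $\partial\Omega$ and $\chi$ harmonic. The structural input is that this Neumann harmonic space is two-dimensional for the thickened-torus geometry and that $\v{h} \mapsto (\text{its two circulations around the top cycles})$ is an isomorphism. Since $\uu\cdot\nn = 0$, the trace of $\AA_0$ on each boundary torus is a \emph{closed} $1$-form, hence Hodge-decomposes as exact plus harmonic. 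I would choose $c_1,c_2$ to kill the harmonic part of the top trace and then $\chi$, via a Dirichlet problem whose data are the two torus potentials, to cancel the exact parts on both tori. The top trace then vanishes and what remains on the bottom is the surviving harmonic (constant) part, which is the vector $(m_1,m_2,0)$; a Stokes computation identifies $m_1,m_2$ with the two fluxes of $\uu$, so they are determined by $\uu$ and, one checks, uniquely so.

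The main obstacle I anticipate is the bookkeeping across $\FF$ rather than any isolated estimate. Curl and tangential traces are metric-independent, so $\curl{\AA}=\uu$ and the conditions \eqref{VP1}--\eqref{VP2} transfer cleanly under pullback, and because $\FF=\mathrm{id}$ near the bottom the constant trace there is preserved verbatim. The divergence, by contrast, is metric-dependent: after pushforward it must be repaired on $\Omega$ by solving a Poisson problem, after which the trace-matching can use only a \emph{harmonic} gauge $\chi$ together with the harmonic fields $\v{h}_i$, precisely so that $\div{\AA}=0$ is not destroyed. Verifying the two-dimensionality and nondegeneracy of the Neumann harmonic space for this geometry, and propagating the Hölder regularity through each elliptic solve, are the steps that require the most care.
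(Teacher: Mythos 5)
Your proposal is sound, but it reaches the theorem by a genuinely different route in the existence step, so a comparison is worthwhile. The paper's proof is constructive: it periodizes the Biot--Savart integral using a partition of unity subordinate to the periodic cells and a symmetrized principal-value summation, with Lemma~\ref{Biot-Savart} (Newtonian potential regularity) giving $\BB\in C^{2,\gamma}$ and $\curl{\BB}=\uu$ directly; you instead pass to the compact quotient $\Omega/\Lambda$ and invoke de Rham/Hodge theory, noting that the only obstruction in $H^2\cong\RR$ is killed by the vanishing flux through the top boundary. For the boundary normalization the two arguments are essentially the same in different languages: the paper pulls the tangential trace back to $\RR^2$, observes via the surface-curl identity and $\uu\cdot\nn=0$ that it is a closed periodic $1$-form, writes it as $\nabla f_0+(a_1,a_2)$, and corrects with gradients of harmonic functions ($\phi$, and $\phi_j+x_j$, which play exactly the role of your harmonic fields $\v{h}_j$, except that the paper's fields need not be tangent to the boundary), while your ``exact plus harmonic'' splitting of the traces, the choice of $c_1,c_2$ by periods, and the Dirichlet solve for $\chi$ constitute the invariant formulation of the same step. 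Your approach buys conceptual clarity (the obstruction spaces are visibly $H^2$ and $H^1$ of the thickened torus), an explicit proof of the converse direction --- which the paper leaves implicit, although it follows from the same identity $\nn\cdot\curl{\AA}=$ surface divergence of $\AA\times\nn$ --- and portability to other geometries; the paper's approach buys self-containedness, since the $C^{2,\gamma}$ regularity comes from classical potential theory and scalar Dirichlet problems rather than from Hodge--Morrey--Friedrichs theory, whose standard references are Sobolev-based and would need Schauder upgrades in the H\"older scale. One caveat you should address: of your two suggested constructions of $\AA_0$, the vertical-variable homotopy integration alone does \emph{not} give the claimed regularity gain (it yields only a $C^{1,\gamma}$ primitive, as there is no smoothing in the horizontal variables), so the elliptic Hodge--Laplace solve with Schauder estimates, which you mention as an alternative, is the one that must carry the argument.
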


Let us explain the connection between the constants $m_1,m_2$ and the fluxes $a_1,a_2$ corresponding to the vertical sides $\Sigma_1,\Sigma_2$ of a basic periodic cell of the fluid domain that are parallel to the lattice vectors $\gv{\lambda}_1$ and $\gv{\lambda}_2$ respectively. Using  Stokes theorem, we find that
\[
a_j = \int_{\Sigma_j} \curl{\AA} \cdot \D\v{S} = \oint_{\partial \Sigma_j} \AA \cdot \D\v{L} = (m_2,-m_1) \cdot \gv{\lambda}_j, \ \ j=1,2.
\]
Here the vertical components of the contour integral cancel due to the periodicity and the top part is zero because of \eqref{VP1}.

The proof of the theorem relies on the following regularity result for the Biot-Savart integral
\[
\BS_\Omega(\uu) \coloneqq  \frac{1}{4\pi} \int_{\Omega} \uu(\v{y}) \times \frac{\v{x} - \v{y}}{|\v{x} - \v{y}|^3}\, \D V.
\]

\begin{lemma}\label{Biot-Savart} Let $\Omega$ be a bounded domain with $C^{3,\gamma}$-smooth boundary and let $\uu \in C^{1,\gamma}(\Omega)$. Then $\BS_{\Omega}(\uu) \in C^{2,\gamma}(\Omega)$ and
	\[
	\|\BS_\Omega(\uu)\|_{C^{2,\gamma}(\Omega)} \leq C \|\uu\|_{C^{1,\gamma}(\Omega)},
	\]
	where the constant $C=C(\Omega,\gamma)$ depends only on the domain $\Omega$ and $\gamma$.
\end{lemma}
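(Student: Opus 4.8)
The plan is to reduce the statement to a classical Schauder estimate for the Newtonian (volume) potential. I introduce the componentwise Newtonian potential of $\uu$,
\[
\v{w}(\v{x}) = \frac{1}{4\pi}\int_\Omega \frac{\uu(\v{y})}{|\v{x}-\v{y}|}\,\D V .
\]
Since the kernel $\v{x}\mapsto |\v{x}-\v{y}|^{-2}$ is locally integrable in $\RR^3$, the potential $\v{w}$ is well defined and of class $C^1$, and one may differentiate under the integral sign once. Using $\nabla_{\v{x}}|\v{x}-\v{y}|^{-1} = -(\v{x}-\v{y})/|\v{x}-\v{y}|^{3}$ together with the identity $\curl{(\phi\,\v{a})} = \nabla\phi\times\v{a}$ for $\v{a}$ independent of $\v{x}$, I obtain
\[
\curl{\v{w}}(\v{x}) = \frac{1}{4\pi}\int_\Omega \Bigl(\nabla_{\v{x}}\tfrac{1}{|\v{x}-\v{y}|}\Bigr)\times\uu(\v{y})\,\D V = \frac{1}{4\pi}\int_\Omega \uu(\v{y})\times\frac{\v{x}-\v{y}}{|\v{x}-\v{y}|^3}\,\D V = \BS_\Omega(\uu)(\v{x}).
\]
Thus $\BS_\Omega(\uu)=\curl{\v{w}}$, and the lemma follows once I show that $\v{w}\in C^{3,\gamma}(\Omega)$ with $\|\v{w}\|_{C^{3,\gamma}(\Omega)}\le C\|\uu\|_{C^{1,\gamma}(\Omega)}$, since taking the curl costs a single derivative and yields exactly the claimed $C^{2,\gamma}$ bound.

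The substance is therefore the fact that the Newtonian potential gains precisely two derivatives: $\uu\in C^{1,\gamma}$ produces $\v{w}\in C^{3,\gamma}$. In the interior this is classical. By the standard Schauder theory for the Newtonian potential (see, e.g., the monograph of Gilbarg and Trudinger), one has $-\Delta\v{w}=\uu$ in $\Omega$, and the second derivatives $D^2\v{w}$ satisfy interior $C^{0,\gamma}$ estimates controlled by $\|\uu\|_{C^{0,\gamma}}$; differentiating the source once, which is legitimate because $\uu\in C^{1,\gamma}$, upgrades this to interior $C^{3,\gamma}$ bounds with the desired dependence on $\|\uu\|_{C^{1,\gamma}}$.

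The main obstacle is the regularity up to the boundary, where the effective source $\uu\chi_\Omega$ jumps across $\partial\Omega$ and the interior estimates degenerate; this is exactly where the hypothesis $\partial\Omega\in C^{3,\gamma}$ is used. I would first extend $\uu$ to $\tilde\uu\in C^{1,\gamma}(\RR^3)$ with compact support and $\|\tilde\uu\|_{C^{1,\gamma}(\RR^3)}\le C\|\uu\|_{C^{1,\gamma}(\Omega)}$, which is possible because a $C^{3,\gamma}$ boundary admits a bounded extension operator, and then split
\[
\v{w} = N_{\RR^3}\tilde\uu - \v{g}, \qquad \v{g}(\v{x}) = \frac{1}{4\pi}\int_{\RR^3\setminus\Omega}\frac{\tilde\uu(\v{y})}{|\v{x}-\v{y}|}\,\D V .
\]
The full-space potential $N_{\RR^3}\tilde\uu$ solves $-\Delta N_{\RR^3}\tilde\uu = \tilde\uu$ on all of $\RR^3$ with a genuinely $C^{1,\gamma}$ right-hand side, so the global Schauder estimate gives $N_{\RR^3}\tilde\uu\in C^{3,\gamma}(\RR^3)$ with the required bound and no boundary issue. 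The correction $\v{g}$ is harmonic in $\Omega$, its source being supported in the complement, hence automatically smooth in the interior; its regularity up to $\partial\Omega$ from inside would follow from boundary Schauder estimates for the Dirichlet problem on the $C^{3,\gamma}$ domain $\Omega$.

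The delicate point, and the place where the bulk of the technical work sits, is precisely establishing that the boundary trace of $\v{g}$ belongs to $C^{3,\gamma}(\partial\Omega)$ — a layer-potential regularity statement that reflects the $C^{3,\gamma}$ smoothness of $\partial\Omega$. Granting this, combining the two contributions yields $\v{w}\in C^{3,\gamma}(\Omega)$ with the stated norm control, and hence $\BS_\Omega(\uu)=\curl{\v{w}}\in C^{2,\gamma}(\Omega)$ together with the estimate $\|\BS_\Omega(\uu)\|_{C^{2,\gamma}(\Omega)}\le C\|\uu\|_{C^{1,\gamma}(\Omega)}$, with $C=C(\Omega,\gamma)$. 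I expect the interior gain and the reduction $\BS_\Omega=\curl N_\Omega$ to be routine, and the up-to-the-boundary control of the harmonic correction term to be the genuine difficulty.
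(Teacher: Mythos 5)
Your reduction is exactly the paper's: writing $\BS_\Omega(\uu)=\curl{\v{w}}$, with $\v{w}$ the componentwise Newtonian potential, is the same (up to sign and relabelling of components) as the paper's observation that the entries of the Biot--Savart integral are the operators
\[
B_m(u)(\v{x})=\frac{1}{4\pi}\int_\Omega u(\v{y})\,\frac{x_m-y_m}{|\v{x}-\v{y}|^3}\,\D V,
\]
i.e.\ the first derivatives of the Newtonian potential $I(u)$, differentiation under the integral being justified by \cite[Lemma 4.1]{GT}. Both arguments then hinge on one and the same fact: the Newtonian potential maps $C^{1,\gamma}(\Omega)$ boundedly into $C^{3,\gamma}(\Omega)$ \emph{up to the boundary} when $\partial\Omega\in C^{3,\gamma}$. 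The paper invokes this as a well-known classical result and stops. Had you done the same, your proof would be complete and essentially identical to the paper's.

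Instead you set out to prove that mapping property, and your argument has a genuine gap exactly at the point you flag yourself. The interior gain of two derivatives, the extension of $\uu$, and the splitting $\v{w}=N_{\RR^3}\tilde\uu-\v{g}$ are all fine; but the entire boundary content of the lemma is concentrated in the step you ``grant'', namely that the harmonic correction $\v{g}$ (equivalently its trace on $\partial\Omega$) is $C^{3,\gamma}$ up to the boundary. This is not a peripheral detail: $\v{g}$ is itself a Newtonian potential, of the source $\tilde\uu$ restricted to a bounded piece $B\setminus\overline{\Omega}$ of the exterior (with $B$ a large ball containing the support of $\tilde\uu$), so the natural way to obtain the $C^{3,\gamma}$ regularity of its trace is to apply the very up-to-the-boundary mapping property you are trying to prove, now on the domain $B\setminus\overline{\Omega}$ --- a circular argument. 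A non-circular proof of this step requires genuine potential-theoretic work (integration by parts to convert the volume potential into layer potentials plus lower-order terms, together with the regularity of layer potentials with smooth densities on $C^{3,\gamma}$ surfaces, or an equivalent boundary Schauder argument carried out from scratch), which is precisely the content of the classical result the paper cites. So, judged as a derivation from standard citable facts, your proposal coincides with the paper's proof; judged as the self-contained argument it presents itself to be, it leaves the essential up-to-the-boundary estimate unproven.
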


\begin{proof} Note that it is enough to consider the scalar operators
	\[
	B_m(u)(\v{x}) = \frac{1}{4\pi}\int_{\Omega} u(\v{y}) \frac{x_m-y_m}{|\v{x}-\v{y}|^3}\, \D V,
	\]
	which are the partial derivatives of the Newtonian potential
	\[
	I(u)(\v{x}) =\frac{1}{4\pi} \int_{\Omega} u(\v{y}) \frac{1}{|\v{x}-\v{y}|}\, \D V;
	\]
	see e.g.~\cite[Lemma 4.1]{GT}.
	It is well known that $I\colon C^{1,\gamma}(\Omega) \to C^{3,\gamma}(\Omega)$, which implies that
	\[
	B_m\colon C^{1,\gamma}(\Omega) \to C^{2,\gamma}(\Omega).
	\]
	This finishes the proof of the lemma.
\end{proof}

\begin{proof}[Proof of Theorem \ref{VectPot}] Let us define
\[
\Lambda_{lj} = \{(a_1+l) \gv{\lambda}_1 + (a_2+j) \gv{\lambda}_2 : a_1,a_2 \in (0,1) \}, \ D_{lj} = \Lambda_{lj} \times (-d,0)
\]	
and
\[
\Omega_{lj} = \v{F}(D_{lj}), \ \ l,j \in \mathbb{Z},
\]	
which splits the domain $\Omega$ into simple periodic cells. Furthermore, let us consider a $C^{3,\gamma}$-domain $\tilde{\Omega} \supset \Omega_{00}$ such that $\tilde{\Omega} \subset \cup_{|l|,|j| \leq 1} \overline{\Omega_{lj}}$. Thus, $\tilde{\Omega}$ intersects only eight neighbouring cells. Now let $\phi_{lj} \in C^{\infty}$ be a partition of unity on $\Omega$ such that (i) $\phi_{lj} = 1$ on $\Omega_{lj}$; (ii)  $\phi_{lj} = 0$ on $\Omega \setminus \tilde{\Omega}_{lj}$, where $\tilde{\Omega}_{lj}=\tilde{\Omega}+(l \gv{\lambda}_1 + j \gv{\lambda}_2,0)$; (iii) $\phi_{lj}(\v{x})=\phi_{00}(\v{x}-l\gv{\lambda}_1-j\gv{\lambda}_2)$, $l,j \in \mathbb{Z}$. Then, for a given vector field $\uu \in \XT{1}$, we let
\[
\BB_{lj}(\v{x})=\BS_{\tilde{\Omega}_{lj}}(\phi_{lj}\uu)(\v{x})=\frac{1}{4\pi} \int_{\tilde{\Omega}_{lj}} \phi_{lj}(\v{y}) \uu(\v{y}) \times \frac{\v{x} - \v{y}}{|\v{x} - \v{y}|^3}\, \D V,
\]
which is the Biot-Savart integral of $\phi_{lj}\uu$. The latter integral converges for all $\v{x} \in \RR^3$ and $\BB_{lj} \in C^{2,\gamma}(\tilde{\Omega}_{lj})$, as can be seen using Lemma \ref{Biot-Savart}.

Because the vector fields $\BB_{lj}$ are not periodic in general, we define
\begin{equation} \label{BBdef}
\BB(\v{x}) = p.v. \sum_{lj} \BB_{lj}(\v{x}) \coloneqq \frac{1}{2}\sum_{l,j \in {\mathbb Z}} [\BB_{lj}(\v{x}) + \BB_{(-l)(-j)}(\v{x})].
\end{equation}
Now because
\[
|\BB_{lj}(\v{x}) + \BB_{(-l)(-j)}(\v{x})| \leq C(\v{x}) \frac{1}{1 + |l|^3 + |j|^3}, \ \ j,l \in {\mathbb{Z}},
\]
for some bounded function $C(\v{x})$, the series in \eqref{BBdef} converges uniformly and so $\BB$ is well defined and continuous everywhere.
Furthermore, since
\[
|\textrm{D}^k\BB_{lj}(\v{x})| \le C(\v{x}) \frac{1}{1+|l|^{2+k}+|j|^{2+k}}
\]
for all $k\in \mathbb N$ and $\v{x}\in \Omega \setminus \overline{ \tilde \Omega_{lj}}$, we obtain that
$\BB$ is periodic, $\BB \in C^{2,\gamma}(\Omega; \RR^3)$ and $\curl{\BB} = \uu$. Let us prove that there exists a function $g \in C^{3, \gamma}_{loc}(\RR^3)$ such that $\BB-\nabla g$ satisfies the boundary conditions. We let
\[
\BB^t = \BB|_{\partial\Omega^t} - (\BB|_{\partial\Omega^t} \cdot \nn) \nn
\]
be the tangential part of the field $\BB$ on the top boundary.  Consider the tangent vectors
$\v{S}_X$ and $\v{S}_Y$ and define a two-dimensional vector field $B^* = (B^*_1,B^*_2) \in C^{2,\gamma}_{per}(\RR^2; \RR^2)$ by
\[
\begin{split}
& B_1^*(X,Y) = \BB^t(\v{S}(X,Y)) \cdot \v{S}_X(X,Y)= \BB(\v{S}(X,Y)) \cdot \v{S}_X(X,Y), \\
& B_2^*(X,Y) = \BB^t(\v{S}(X,Y)) \cdot \v{S}_Y(X,Y)= \BB(\v{S}(X,Y)) \cdot \v{S}_Y(X,Y).
\end{split}
\]
We claim that the field $B^*$ is conservative, that is there exists a function $f \in C^{3,\gamma}_{loc}(\RR^2)$ (not necessarily periodic) such that $B^* = \nabla f$.
Indeed, this follows from the relation
\[
\v{n}\cdot \curl \BB= \frac{1}{|\v{S}_X\times \v{S}_Y|}((B_2^*)_X-(B_1^*)_Y)
\]
(see \cite[\S 97]{Brand}) and the fact that $\uu\cdot \nn=0$ on $\partial \Omega$.
Because $B^*$ is periodic, we necessarily have
\[
B^* = \nabla f_0 + (a_1,a_2),
\]
where $f_0 \in C^{3, \gamma}(\RR^2)$ is periodic and $a_1,a_2$ are constants.

A similar argument is valid for the bottom boundary. In this case the tangential vectors are
\[
\widetilde{\ww}_1 = (1,0,0), \ \ \widetilde{\ww}_2 = (0,1,0)
\]
and the corresponding two-dimensional field is given by
\[
\widetilde{B}_j^*(x,y) = \BB(x,y,-d) \cdot \widetilde{\ww}_j(x,y), \ \ j=1,2.
\]
Just as before, we obtain
\[
\widetilde{B}^* = \nabla \widetilde{f}_0 + (\widetilde{a}_1,\widetilde{a}_2)
\]
for some periodic function $\widetilde{f}_0$ and constants $\widetilde{a}_1,\widetilde{a}_2$.

In order to eliminate the tangential periodic part of $\BB$ on the boundary, we solve the Dirichlet problem
\begin{align*}
& \Delta \phi = 0 \ \ \text{in} \ \ \Omega, \\
& \phi = f_0 \circ \v{F}^{-1} \ \ \text{on} \ \ \partial\Omega^t, \\
& \phi = \widetilde{f}_0 \ \ \text{on} \ \ \partial\Omega^b.
\end{align*}
Because both functions $f_0$ and $\widetilde{f}_0$ are periodic, there is a unique periodic solution $\phi \in C^{3, \gamma}(\Omega)$.
We also define $\Phi(\v{x})=\v{F}^{-1}(\v{x})$ and let $\Phi_\text{per}(\v{x})=\Phi(\v{x})-\v{x}$ be its periodic part. Letting $\phi_j \in C^{3, \gamma}(\Omega)$, $j=1,2$, be the unique solution of the Dirichlet problem
\begin{align*}
& \Delta \phi_j = 0 \ \ \text{in} \ \ \Omega, \\
& \phi_j = \Phi_{\text{per},j} \ \ \text{on} \ \ \partial\Omega^t, \\
& \phi_j = \Phi_{\text{per},j}\ \ \text{on} \ \ \partial\Omega^b,
\end{align*}
we find that $\nabla(\phi_1+x) \cdot \v{S}_X=\nabla \Phi_1 \cdot \v{S}_X=1$ on $\partial \Omega^t$.
Similarly, $\nabla (\phi_2+y)\cdot \v{S}_Y=1$, while $\nabla (\phi_1+x)\cdot \v{S}_Y=\nabla (\phi_2+y)\cdot \v{S}_X=0$ on $\partial \Omega^t$, and $\nabla(\phi_1+x) \cdot \widetilde{\ww}_1=\nabla(\phi_2+y) \cdot \widetilde{\ww}_2=1$, $\nabla(\phi_1+x) \cdot \widetilde{\ww}_2=\nabla(\phi_2+y) \cdot \widetilde{\ww}_1=0$ on $\partial \Omega^b$ (note that $\Phi_{\text{per},j}=0$ on $\partial\Omega^b$ for $j=1,2$). Now we put
\[
\AA = \BB - \nabla (\phi + a_1 (\phi_1+x) + a_2 (\phi_2+y)).
\]
A direct calculation shows that
\[
[ \AA - (\AA \cdot \nn) \nn ] \cdot \v{S}_X = [ \AA - (\AA \cdot \nn) \nn ] \cdot \v{S}_Y = 0
\]
on $\partial\Omega^t$, so that
\[
\AA|_{\partial\Omega^t} \times \nn = 0.
\]
On the other hand, we get
\[
 \AA - (\AA \cdot \nn) \nn  = (\widetilde{a}_1 - a_1, \widetilde{a}_2 - a_2, 0)
\]
on $\partial\Omega^b$. Thus,
\[
\AA|_{\partial\Omega^b} \times \nn = (a_2-\widetilde{a}_2, a_1 - \widetilde{a}_1,0).
\]
This finishes the proof of the theorem.
\end{proof}

\section{Variational principle}

In this section we will formulate and prove a variational principle for Beltrami vector fields with a free surface. The principle is formulated in terms of the vector potential $\AA$ introduced in the previous section and the $C^{3,\gamma}$ domain $\Omega$. A key difficulty is that the surface of the domain is not fixed and is a part of the variation. The admissible domains will be parametrized by maps $\v{F}\in M_{per}$. One of the issues is that the vector potential $\AA$ depends on $\Omega$ through its domain of definition. This can be solved by extending $\AA$ to the whole of $\RR^3$. However, $\AA$ and $\Omega$ are still coupled through the boundary condition imposed on the free surface. Thus, the proper way to think about the domain of the involved functionals is as a submanifold of $C^{2,\gamma}(\RR^3;\RR^3)\times C_{loc}^{3,\gamma}(\RR^3; \RR^3)$. Rather than making this approach completely rigorous, we shall simply consider critical points along admissible families of curves. After presenting the the theorem and its proof, we will discuss some alternative perspectives on the variational formulation which might be useful for further studies.

Let $\Omega_{00}=\v{F}(D_{00})$ be a single cell of the periodic domain defined in Section 2 for which the top boundary is given by
\[
\partial \Omega_{00}^t = \{\v{S}(\v{X}'): \v{X}'=(X,Y) \in \Lambda_{00} \}.
\]
Let us consider the  functionals
\begin{equation} \label{en}
{\cal E}(\AA, \Omega) = \int_{\Omega_{00}} \left[ |\curl{\AA}|^2 - 2g z \right] \ \, \D V - \sigma \int_{\partial \Omega_{00}^t} \, \D S, \nonumber
\end{equation}
\begin{equation} \label{const1}
{\cal K}(\AA, \Omega) = \int_{\Omega_{00}} [\AA \cdot (\curl \AA)] \ \, \D V \nonumber
\end{equation}
and
\begin{equation} \label{const2}
{\cal M}(\AA, \Omega) = \int_{\Omega_{00}} 1 \ \, \D V, \nonumber
\end{equation}
where $\AA \in C^{2,\gamma}(\RR^3;\RR^3)$  satisfies
\begin{align}
 \AA \times \v{n} = 0 \ \ & \text{on } \surf, \label{cond} \\
 \AA \times \v{n} = (m_1,m_2,0) \ \ & \text{on }  \bottom \label{condbot}
\end{align}
for some fixed constants $m_1,m_2 \in \RR$. We look for critical points of the functional
\[
{\cal J}= {\cal E} - \alpha {\cal K} - \mu {\cal M}
\]
where $\alpha$ and $\mu$ are fixed constants. This is, at least formally, equivalent to considering
critical points of $\mathcal E$ subject to the constraints of fixed $\mathcal K$ and $\mathcal M$.
When taking variations of the functionals, we consider a family of domains $\Omega(t)=\v{F}(t)(D)$, $t\in (-\delta,\delta)$, where $\v{F}(t) \in M_{per}$ is family of domain parametrizations which is continuously differentiable in $t$ in the $C_{loc}^{3,\gamma}$ topology. Note that we have written $\Omega$ rather than $\v{F}$ in the arguments
of the functionals $\mathcal E$, $\mathcal K$ and $\mathcal M$ to emphasize that they only depend on $\Omega=\v{F}(D)$ and not on the specific parametrization. However, when taking derivatives later, we will write $\delta \FF$ to indicate the direction in which we differentiate.
We also consider a continuous family of vector fields $\AA(t) \in C_{per}^{2,\gamma}(\RR^3; \RR^3)$, $t\in (-\delta,\delta)$, such that $t \mapsto \AA(t)$ is differentiable when considered as a map from $(-\delta,\delta)$ into
$C_{per}^{1,\gamma}(\RR^3; \RR^3)$. The reason for only assuming differentiability with respect to the  $C^{1,\gamma}$ topology and not the $C^{2,\gamma}$ topology is that we use compositions to construct suitable curves and that this leads to a loss of derivatives.
The vector fields $\AA(t)$ are assumed to satisfy conditions \eqref{cond} and \eqref{condbot}.
Such curves will be called {\em admissible} as will the corresponding variations
\[
\delta \FF=\frac{d}{dt} \FF(t)\Big|_{t=0}, \qquad \delta \AA=\frac{d}{dt} \AA(t)\Big|_{t=0}.
\]
It follows from the admissibility conditions that
\begin{equation} \label{var3}
\delta \v{F}(X,Y, -d)=0, \nonumber
\end{equation}
and
\begin{equation} \label{var4}
\delta \v{F}(\v{X}'+\gv{\lambda}, Z)=\delta \v{F}(\v{X}', Z), \qquad \gv{\lambda}\in \Lambda, \nonumber
\end{equation}
while

\begin{equation} \label{var5}
 \delta \AA \times \v{n}  + (\textrm{D} \AA\, \delta \v{F} \circ \v{F}^{-1}) \times \v{n}
+\AA \times \frac{ \v{S}_X \times \delta \v{S}_Y + \delta \v{S}_X \times \v{S}_Y}{|\v{S}_X \times \v{S}_Y|}\circ \v{F}^{-1} = 0
\end{equation}
along the top boundary $\surf$ and
\begin{equation} \label{var5bot}
\delta \AA \times \v{n} = 0
\end{equation}
on the bottom $\bottom$, where $\v{F}=\v{F}(0)$, $\AA=\AA(0)$ etc. Note that \eqref{var5} gives a relation between $\delta \AA$ and $\delta \v{F}$, so that the variations are not independent.

Our main theorem is the following variational principle.
\begin{theorem} \label{MainTh} $(\AA,\Omega)$ is a critical point of the functional $\mathcal J$ if and only if $\AA$ is a vector potential of a steady Beltrami flow $\uu = \curl{\AA}$ in $\Omega$ satisfying
the boundary conditions \eqref{B3} and \eqref{B4} provided the variations are taken among admissible curves.
\end{theorem}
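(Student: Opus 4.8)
The plan is to compute the first variation $\frac{\mathrm{d}}{\mathrm{d}t}\mathcal{J}(\AA(t),\Omega(t))\big|_{t=0}$ along an arbitrary admissible curve and to show it vanishes for every such curve precisely when the Beltrami equation \eqref{B2} holds in $\Omega$ and the dynamic condition \eqref{B4} holds on $\surf$. I would split the computation by the chain rule into a field part (variation of $\AA$ at fixed domain) and a domain part (variation of $\Omega$ at fixed extended field). For the field part of $\mathcal{E}$ and $\mathcal{K}$ I would use the identity $\int_\Omega(\v b\cdot\curl{\v a}-\v a\cdot\curl{\v b})\,\mathrm{d}V=\int_{\partial\Omega}(\v a\times\v b)\cdot\nn\,\mathrm{d}S$ with $\v a=\delta\AA$ and $\v b=\curl{\AA}=\uu$ (respectively $\v b=\AA$); for the domain part I would use the transport formula $\delta\int_{\Omega_{00}}f\,\mathrm{d}V=\int_{\partial\Omega_{00}}f\,(\v w\cdot\nn)\,\mathrm{d}S$ with $\v w=\delta\v{F}\circ\v{F}^{-1}$, together with the first variation of area $\delta\int_{\surf}\mathrm{d}S=\int_{\surf}(\div{\nn})(\v w\cdot\nn)\,\mathrm{d}S=-\int_{\surf}2K_M\,(\v w\cdot\nn)\,\mathrm{d}S$, which uses the convention $2K_M=-\div{\nn}$.

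First I would dispose of all boundary contributions except on the top. On $\bottom$ one has $\delta\v{F}=0$ and $\delta\AA\times\nn=0$ by \eqref{var5bot}, while $\AA\times\nn$ is constant and $\uu\cdot\nn=0$, and a short check shows every surface integrand vanishes there; the vertical faces of $\Omega_{00}$ cancel in pairs by periodicity. Collecting the bulk terms, the field parts of $\mathcal{E}$ and $\mathcal{K}$ combine into $2\int_{\Omega_{00}}\delta\AA\cdot(\curl{\uu}-\alpha\uu)\,\mathrm{d}V$, the domain parts giving nothing in the interior. Choosing $\delta\v{F}=0$ and $\delta\AA$ compactly supported in the interior (admissible, since then \eqref{var5} and \eqref{var5bot} hold trivially) forces $\curl{\uu}=\alpha\uu$, i.e.\ \eqref{B2}. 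Here \eqref{Euler2}/\eqref{B1} is automatic since $\uu=\curl{\AA}$, and \eqref{B3} is automatic because $\nn\cdot\curl{\AA}$ equals the surface scalar-curl of the tangential part of $\AA$ (the identity used in the proof of Theorem \ref{VectPot}), which vanishes wherever $\AA\times\nn$ is constant.

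With \eqref{B2} in force the bulk term drops and only the integral over $\surf$ survives; this is the crux. Its integrand is $2(\delta\AA\times\uu)\cdot\nn+(|\uu|^2-2gz+2\sigma K_M-\mu)(\v w\cdot\nn)$, and the difficulty is that $\delta\AA$ and $\v w$ are coupled by \eqref{var5}. The key is that \eqref{cond} forces $\AA=a\,\nn$ on $\surf$ for a scalar $a$, while $\uu$ is tangential. Writing $(\delta\AA\times\uu)\cdot\nn=\delta\AA\cdot(\uu\times\nn)$ and inserting \eqref{var5} in the equivalent form $\delta\AA=-\mathrm{D}\AA\,\v w+(\delta a)\nn+a\,\delta\nn$ (obtained by differentiating $\AA=a\nn$ along the moving surface), the normal-speed part $w_n=\v w\cdot\nn$ produces $-w_n\,(\partial_\nn\AA)\cdot(\uu\times\nn)$, which a local-frame computation using $\uu=\curl{\AA}$ and the vanishing of the tangential components of $\AA$ on $\surf$ evaluates to $-w_n|\uu|^2$ modulo terms of the form $\nabla_\Sigma(\,\cdot\,)\cdot(\uu\times\nn)$. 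These residual terms, together with the $a\,\delta\nn$ contribution, assemble into a surface divergence $\mathrm{div}_\Sigma(\,\cdot\,(\uu\times\nn))$; by the divergence theorem on the periodic cell and $\mathrm{div}_\Sigma(\uu\times\nn)=\nn\cdot\curl{\uu}=\alpha\,\uu\cdot\nn=0$ they integrate to zero, and the purely tangential part of $\v w$ should likewise reduce to surface divergences and drop, reflecting reparametrization invariance. What remains is $\int_{\surf}(\text{Bernoulli expression})(\v w\cdot\nn)\,\mathrm{d}S$, and requiring this to vanish for all admissible $\v w\cdot\nn$ yields exactly \eqref{B4}.

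The converse follows by reading the same identity backwards: if $\uu=\curl{\AA}$ is Beltrami and satisfies \eqref{B3}–\eqref{B4}, then both the bulk factor $\curl{\uu}-\alpha\uu$ and the surface Bernoulli expression vanish, so $\delta\mathcal{J}=0$ along every admissible curve. I expect the genuine obstacle to be the third paragraph: evaluating $2(\delta\AA\times\uu)\cdot\nn$ under the coupling \eqref{var5}, isolating the pieces that are exact surface divergences so that they die by periodicity, and keeping precise track of the curvature and kinetic contributions with the convention $2K_M=-\div{\nn}$ so that the surviving integrand is exactly \eqref{B4}. The algebra of $\delta\nn$ and of the convective term $\mathrm{D}\AA\,\v w$ on a non-flat, possibly overhanging surface is where the bookkeeping is most delicate.
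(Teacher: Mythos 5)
Your overall structure is the same as the paper's: interior variations $(\delta \AA,0)$ yield \eqref{B2}, conditions \eqref{B1} and \eqref{B3} are automatic exactly as you say, the dynamic condition \eqref{B4} comes from the surviving top-surface integral after inserting the coupling \eqref{var5}, and the converse is read off from the same identities. Your handling of the crux differs from the paper's only in bookkeeping, and its claims are in fact correct: writing $\AA=a\nn$ on $\surf$ and $\delta \AA=-\mathrm{D}\AA\,\v{w}+(\delta a)\nn+a\,\delta\nn$, one finds $\uu\times\nn=(\partial_{\nn}\AA)_{||}-\nabla_{||}a$, whence $(\partial_{\nn}\AA)\cdot(\uu\times\nn)=|\uu|^2+\nabla_{||}a\cdot(\uu\times\nn)$ using $\uu\cdot\nn=0$; the shape-operator terms coming from $-\mathrm{D}\AA\,\v{w}_{||}$ and from $a\,\delta\nn$ cancel; and the residue is $-\nabla_{||}(a\,w_n)\cdot(\uu\times\nn)=-\nabla_{||}\cdot\bigl(a\,w_n(\uu\times\nn)\bigr)$, since $\nabla_{||}\cdot(\uu\times\nn)=\nn\cdot\curl{\uu}=\alpha\,\uu\cdot\nn=0$, so it integrates to zero over the periodic cell. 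This reproduces, in different variables, exactly the paper's cancellation between its integrals $I_1$ and $I_2$ (symmetry of the shape operator, dual basis vectors, surface divergence theorem), and it leads to the same final integrand as in the paper.

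The genuine gap is your very last step: ``requiring this to vanish for all admissible $\v{w}\cdot\nn$ yields exactly \eqref{B4}.'' Because \eqref{var5} couples $\delta\AA$ to $\delta\v{F}$, it is not automatic that the admissible class realizes enough normal displacements $\delta\eta=\v{w}\cdot\nn$ to invoke the fundamental lemma: given a target $\delta\eta$, you must exhibit an entire curve $t\mapsto(\AA(t),\v{F}(t))$ with $\v{F}(t)\in M_{per}$ and $\AA(t)$ satisfying \eqref{cond}--\eqref{condbot} on the \emph{moving} boundary, differentiable in $t$. The paper isolates precisely this point in Lemma \ref{Teq}: one mollifies the normal perturbation so that $\v{F}(t)=\v{F}+t\,\delta\v{F}_\varepsilon$ remains a diffeomorphism, and then transports the potential by the operator $T_{\v{F}}$, setting $\AA(t)=T_{\v{F}(t)}T_{\v{F}}^{-1}\AA$; the role of $T_{\v{F}}$ is to convert the fixed flat-slab conditions \eqref{condhat}--\eqref{condbothat} into \eqref{cond}--\eqref{condbot} on the deformed domain, and this construction is also the reason admissible curves are only required to be differentiable into $C^{1,\gamma}$ (composition loses a derivative). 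This gives density of the achievable $\delta\eta$ in $C^{2,\gamma}_{per}(\RR^2)$, which is what closes the argument. Your proposal silently assumes this richness; without such a construction (or a substitute), the vanishing of the first variation over the admissible class does not by itself yield the pointwise condition \eqref{B4}. Everything else, including the converse direction, is sound.
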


For the proof we will need the following technical lemma.

\begin{lemma}
\label{Teq}
Let $\v{F}\in M_{per}$ and $\AA \in C^{2,\gamma}(\RR^3)$, satisfying \eqref{cond} and \eqref{condbot}, be given.
The set $\{\delta \v{F}|_{Z=0}\cdot \v{n}\circ \v{S} \colon  (\delta \AA, \delta \v{F}) \text{ is admissible} \}$ is dense in $C^{2, \gamma}_{per}(\RR^2)$.
\end{lemma}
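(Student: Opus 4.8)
The plan is to reduce the statement to two independent facts: that the coupling \eqref{var5} can be solved for $\delta\AA$ for an essentially arbitrary surface variation $\delta\FF$, so that admissibility imposes no genuine restriction on $\delta\FF$; and that the resulting normal traces $\delta\FF|_{Z=0}\cdot(\v{n}\circ\v{S})$ then exhaust a dense subset of $C^{2,\gamma}_{per}(\RR^2)$. First I would analyse \eqref{var5} pointwise on $\surf$. By \eqref{cond} one has $\AA=(\AA\cdot\v{n})\v{n}$ there, so $\AA\times\v{w}=(\AA\cdot\v{n})\,\v{n}\times\v{w}$ is orthogonal to $\v{n}$ for every $\v{w}$; the middle term of \eqref{var5} is a cross product with $\v{n}$ and is likewise orthogonal to $\v{n}$. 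Hence, for any prescribed $\delta\FF$, the quantity forced on $\delta\AA\times\v{n}$ is a tangential field, and crossing it with $\v{n}$ recovers the tangential part of $\delta\AA$ on $\surf$, leaving its normal part free. On $\bottom$ the condition \eqref{var5bot} merely forces the tangential part of $\delta\AA$ to vanish. Extending these boundary data to a periodic field $\delta\AA\in C^{1,\gamma}_{per}(\RR^3;\RR^3)$ shows that every periodic $\delta\FF$ vanishing on $\bottom$ admits a partner $\delta\AA$ satisfying \eqref{var5} and \eqref{var5bot}.

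To promote this infinitesimal data to genuine admissible curves I would use the composition construction the text alludes to. Given $\FF(t)$ with $\delta\FF=\tfrac{d}{dt}\FF(t)|_{0}$, set $\v{\Psi}_t=\FF(t)\circ\FF^{-1}$ and transport $\AA$ as a one-form, $\AA(t)=(\textrm{D}\v{\Psi}_t)^{-T}\,(\AA\circ\v{\Psi}_t^{-1})$. Pullback preserves conormality, so $\AA(t)\times\v{n}(t)=0$ on $\surf(t)$ holds for all $t$, yielding \eqref{cond}; and since $\FF(t)$ fixes $\bottom$ pointwise by (F2), on $\bottom$ one has $\v{\Psi}_t=\mathrm{id}$ while $(\textrm{D}\v{\Psi}_t)^{-T}$ leaves the horizontal components of $\AA$ unchanged, so the constant boundary value in \eqref{condbot} is preserved with the same $m_1,m_2$. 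Differentiating the composition loses exactly one derivative through the term $\textrm{D}\AA\,(\delta\FF\circ\FF^{-1})$, which is precisely why $\delta\AA$ lands only in $C^{1,\gamma}$, matching the admissibility hypotheses. Consequently the set in the lemma coincides with $\{\,\delta\v{S}\cdot(\v{n}\circ\v{S}):\delta\v{S}=\delta\FF|_{Z=0}\,\}$, the normal traces of arbitrary admissible surface variations.

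It then remains to prove these normal traces are dense. Given a target $h\in C^{2,\gamma}_{per}(\RR^2)$ I would take $\delta\v{S}=h\,(\v{n}\circ\v{S})$, for which $\delta\v{S}\cdot(\v{n}\circ\v{S})=h|\v{n}|^2=h$ exactly, and extend $\delta\v{S}$ to a periodic $\delta\FF$ vanishing on $\bottom$. Because the surface $\v{S}$, and hence $\v{n}\circ\v{S}$, is only $C^{2,\gamma}$, this $\delta\v{S}$ is a priori just $C^{2,\gamma}$; to meet the admissible regularity one mollifies, approximating $h\,(\v{n}\circ\v{S})$ by smoother fields $\delta\v{S}_\varepsilon$, whose traces $\delta\v{S}_\varepsilon\cdot(\v{n}\circ\v{S})$ converge to $h$. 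The hard part is exactly this regularity mismatch between the variation and the surface normal: it is what makes the conclusion one of density rather than surjectivity, and controlling the error of the approximation in the relevant Hölder topology is the delicate point. The subsidiary technical burdens are the simultaneous preservation of \eqref{cond} and \eqref{condbot} along the whole curve and the bookkeeping of the one-derivative loss in the composition construction, both of which I would verify directly from the formula for $\AA(t)$ above.
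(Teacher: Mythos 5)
Your proposal is correct and, in its essentials, follows the same route as the paper, so let me focus on the points of divergence. The density step is identical to the paper's: it takes $\delta \v{F}=\widehat{\delta \eta}\,\v{F}_X\times\v{F}_Y/|\v{F}_X\times\v{F}_Y|$, observes that this is only $C^{2,\gamma}$ (the normal costs one derivative of $\v{F}$), and mollifies to obtain a smooth admissible variation with trace close to the target --- exactly your third paragraph, including your correct diagnosis that this regularity mismatch is why the conclusion is density rather than surjectivity. (The delicacy you flag, that mollification converges in the H\"older norm only up to a loss --- strictly speaking one lands in the little H\"older class --- is passed over in the paper's proof as well; it is harmless for the way the lemma is used, namely to test the Euler--Lagrange integrand against a dense family.) The genuine difference is the transport used to produce the curve $\AA(t)$: you pull $\AA$ back as a one-form, $\AA(t)=(\textrm{D}\v{\Psi}_t)^{-T}(\AA\circ \v{\Psi}_t^{-1})$ with $\v{\Psi}_t=\v{F}(t)\circ\v{F}^{-1}$, so that \eqref{cond} is preserved automatically (conormality is invariant under cotangent transport) and \eqref{condbot} follows from your computation that $(\textrm{D}\v{\Psi}_t)^{-T}$ fixes horizontal components on the bottom, where $\v{\Psi}_t=\mathrm{id}$. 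The paper instead uses the frame transport $T_{\v{F}}$, writing $\AA$ in the basis $\{\v{F}_X,\v{F}_Y,\v{F}_X\times\v{F}_Y\}$ and setting $\AA(t)=T_{\v{F}(t)}T_{\v{F}}^{-1}\AA$, so that \eqref{cond} and \eqref{condbot} become the fixed, $t$-independent conditions \eqref{condhat} and \eqref{condbothat} on the coefficients. Both constructions are linear in $\AA$, preserve periodicity, and lose exactly one derivative in $t$, matching the $C^{1,\gamma}$ differentiability required of admissible curves; the paper's version has the extra payoff that $T_{\v{F}}$ is a linear homeomorphism between fixed Banach spaces, which is reused later when the variational principle is reformulated on the fixed domain $D$. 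One last remark: your first paragraph, which solves \eqref{var5} pointwise for $\delta\AA$, is dispensable and on its own would not prove the lemma, since admissibility is defined through curves rather than through the linearized relations \eqref{var5}, \eqref{var5bot}; only your curve construction does actual work here, and the paper accordingly omits any such step.
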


\begin{proof}
Let $\widehat{\delta \eta} \in C^{2, \gamma}_{per}(\RR^2)$ be given and extend it to a function
$\widehat{\delta \eta}\in C^{2, \gamma}_{per}(\RR^3)$ with $\widehat{\delta \eta}=0$ for $|Z|\ge d/2$.
The function
\[
\delta \v{F}=\widehat{\delta \eta}\, \frac{\v{F}_{X}\times \v{F}_{Y}}{|\v{F}_{X}\times \v{F}_{Y}|}
\]
then clearly satisfies
\[
\delta \v{F}|_{Z=0}\cdot \v{n}\circ \v{S} =\widehat{\delta \eta}.
\]
However, $\delta \v{F}$ is only in $C^{2,\gamma}$. Convolving $\delta \v{F}$ with a smooth mollifier which is $\v{X}'$-periodic and has compact support in $Z$, we obtain a function $\delta \v{F}_\varepsilon \in C_{per}^\infty(\RR^3; \RR^3)$ with compact support in $Z$ and
$\|\delta \v{F}_\varepsilon|_{Z=0}\cdot \v{n}\circ \v{S}-\widehat{\delta \eta}\|_{C^{2,\gamma}}<\varepsilon$. We set $\v{F}(t)=\v{F} +t\delta \v{F}_\varepsilon$ and notice that $\v{F}(t)$ is a diffeomorphism from $\RR^3$ to itself for sufficiently small $t$.

Next, for a given vector field $\op{\AA} = (\hat{A}^{(1)},\hat{A}^{(2)},\hat{A}^{(3)}) \in C_{per}^{2,\gamma}(\RR^3;\RR^3)$ and $\FF \in M_{per}$, we define a vector field $T_{\v{F}} \hat{\AA} \in C_{per}^{2,\gamma}(\RR^3; \RR^3)$ by
\begin{equation} \label{vfcorr}
T_{\v{F}} \op{\AA}(\v{x}) = \hat{A}^{(1)}(\v{X}) \FF_X(\v{X}) + \hat{A}^{(2)}(\v{X}) \FF_Y(\v{X}) + \hat{A}^{(3)}(\v{X}) \FF_X(\v{X}) \times  \FF_Y(\v{X}), \nonumber
\end{equation}
where $\v{X} = \FF^{-1}(\v{x})$. The map
\[
T_\v{F} \colon C_{per}^{2,\gamma}(\RR^3;\RR^3) \to C_{per}^{2,\gamma}(\RR^3; \RR^3)
\]
is a linear homeomorphism of Banach spaces. Furthermore, $\op{\AA}$
satisfies the boundary conditions
\begin{align}
&(\hat A_1, \hat A_2)=(0,0) & & \text{on } Z=0, \label{condhat} \\
&(\hat  A_1, \hat A_2)=  (m_1,m_2,0) && \text{on } Z=-d \label{condbothat}
\end{align}
if and only if $T_\v{F}\op{\AA}$ satisfies \eqref{cond} and \eqref{condbot}.
Clearly,  $T_\v{F} \in C^\infty(C_{per}^{2,\gamma}(\RR^3;\RR^3); C_{per}^{2,\gamma}(\RR^3; \RR^3))$,  while
\[
(\op{\AA},\v{F})\mapsto T_{\v{F}} \op{\AA}
\]
belongs to $C^1(C_{per}^{2,\gamma}(\RR^3;\RR^3)\times M_{per}; C_{per}^{1, \gamma}(\RR^3; \RR^3))$.
Setting
\[
\AA(t)=T_{\v{F}(t)} T_{\v{F}}^{-1} \AA,
\]
we find that $(\AA(t), \v{F}(t))$ is an admissible curve.
\end{proof}

\begin{proof}[Proof of Theorem \ref{MainTh}] A critical point is subject to the Euler-Lagrange equation
\begin{equation} \label{EL}
\delta{\cal J}(\AA, \Omega)[\delta \AA,\delta \v{F}] = (\delta {\cal E} - \alpha \delta {\cal K} - \mu \delta {\cal M})(\AA, \Omega)[\delta \AA,\delta \v{F}] = 0,
\end{equation}
where $\alpha, \mu \in \RR$ are Lagrange multipliers.
This equation is valid for all admissible $(\delta \AA, \delta\v{F})=(\AA'(0), \v{F}'(0))$.
First we will show that \eqref{EL} implies \eqref{B2} for $\uu \coloneqq \curl{\AA}$. Then using this fact we will derive the boundary equation \eqref{B4}.

Let us calculate all the variations in \eqref{EL}. A direct calculation gives
\begin{equation} \label{varE} \nonumber
\begin{split}
\delta {\cal E} = 2 \int_{\dom_{00}} (\curl{\AA}) \cdot (\curl{ \delta \AA}) \, \D V & + \int_{\surf_{00}} (|\curl \AA|^2 - 2gz ) \delta \eta \, \D S + \sigma \int_{\surf_{00}} 2 K_M \delta \eta \, \D S,
\end{split}
\end{equation}
where $\delta \eta=\widetilde{\delta \v{S}}\cdot \v{n}$ and
we have used the notation $\widetilde{\delta \v{S}} = \delta \v{S} \circ \v{F}^{-1}$ for convenience. This notation will also be used for other functions below.
An application of the divergence theorem to the first integral leads to
\[
\int_{\dom_{00}} (\curl{\AA}) \cdot (\curl{\delta \AA}) \, \D V = \int_{\dom_{00}} [\curl{(\curl{\AA})}] \cdot \delta \AA \, \D V + \int_{\partial \Omega_{00}} \delta \AA \times (\curl{\AA}) \cdot \D \v{S}.
\]
The boundary integral above equals
\begin{equation} \label{varEb} \nonumber
\begin{split}
\int_{\boundary_{00}} [\delta \AA \times (\curl{\AA})] \cdot \v{n}  \, \D S & = \int_{\bottom_{00}} [\delta \AA \times \v{n}] \cdot (\curl{\AA})  \, \D S  -\int_{\surf_{00}} [\delta \AA \times \v{n}] \cdot (\curl{\AA})  \, \D S.
\end{split}
\end{equation}
The boundary integral over $\bottom$ is zero in view of \eqref{var5bot}.
Similarly, we find
\begin{equation} \label{varK}
\begin{split}
 \delta {\cal K}  & = \int_{\dom_{00}} [ \delta \AA \cdot (\curl{\AA}) + \AA \cdot (\curl{\delta \AA}) ] \ \, \D V + \int_{\surf_{00}} [\AA \cdot (\curl{\AA})]\delta \eta  \, \D S  \\
 & =  2 \int_{\dom_{00}}  (\curl{\AA}) \cdot \delta \AA \ \, \D V +  \int_{\surf_{00}} [\AA \cdot (\curl{\AA})]\delta \eta  \, \D S,
\end{split}
\end{equation}
where the boundary terms arising through integration by parts vanish due to \eqref{cond} and \eqref{var5bot}.

Note that any pair $(\delta \AA , 0)$, with $\delta \AA \in X_N^2(\Omega)$ is admissible since we can choose $\v{F}(t)\equiv \FF$ and $\AA(t)=\AA+t \delta \AA$. Evaluating
$\delta{\cal J}$ for such variations we get
\begin{equation}
\label{interior variations}
\delta{\cal J}(\AA, \Omega)[\delta \AA,0] = 2 \int_{\Omega_{00}} \left[ \curl{(\curl \AA)} - \alpha \curl{\AA} \right] \cdot\delta \AA \, \D V = 0.
\end{equation}
This gives \eqref{B2} for $\uu = \curl \AA$. Furthermore, \eqref{cond} and \eqref{condbot} imply \eqref{B3}. Note that \eqref{B1} is valid for $\uu$ automatically. Thus, it is left to verify the boundary relation \eqref{B4}.

Taking into account \eqref{B3} and \eqref{cond} we find that surface integral in \eqref{varK} vanishes. Thus, we have
\[
\begin{split}
\delta{\cal J}&= -2\int_{\surf_{00}} [\delta \AA \times \v{n}] \cdot (\curl{\AA}) \ \, \D S \\ &\quad +  \int_{\surf_{00}} \left(|\curl \AA|^2 - 2gz + \sigma 2 K_M - \mu \right) \delta \eta \, \D S \\
& = 2I + J.
\end{split}
\]
Let us calculate $I$. For this purpose we use \eqref{var5} to write
\begin{equation} \label{varI} \nonumber
\begin{split}
I & = \int_{\surf_{00}} (\textrm{D}\AA\,  \widetilde{\delta \v{S}})\times \v{n} \cdot (\curl{\AA})  \, \D S \\
&\quad + \int_{\surf_{00}} \AA\times \frac{\widetilde{\v{S}_X} \times \widetilde{\delta \v{S}_Y} + \widetilde{\delta \v{S}_X} \times \widetilde{\v{S}_Y}}{|\widetilde{\v{S}_X} \times \widetilde{\v{S}_Y}|} \cdot (\curl{\AA})  \, \D S \\
& = I_1 + I_2.
\end{split}
\end{equation}
The integrals  $I_1$ and $I_2$ can be rewritten as
\[
\begin{split}
& I_1 = - \int_{\surf_{00}} \left( (\textrm{D}\AA\, \widetilde{ \delta \v{S}}) \times (\curl{\AA}) \right) \cdot \v{n}\,  \, \D S, \\
& I_2 = - \int_{\surf_{00}} (\AA \times (\curl{\AA})) \cdot \v{j}  \,\, \D S,
\end{split}
\]
where
\[
\v{j} =  \frac{\widetilde{\v{S}_X} \times \widetilde{\delta \v{S}_Y} + \widetilde{\delta \v{S}_X} \times \widetilde{\v{S}_Y}}{|\widetilde{\v{S}_X} \times \widetilde{\v{S}_Y}|} .
\]
Note that the normal component of $\v{j}$ does not contribute to $I_2$; therefore only the tangential component $\v{j}_{||}$ is of interest. Let us show that
\begin{equation}
\label{j tangential}
\v{j}_{||}= -\nabla_{||} \delta \eta+(\widetilde{\delta \v{S}}\cdot \v{n}_X)\v{a}+(\widetilde{\delta \v{S}} \cdot \v{n}_Y)\v{b},
\end{equation}
where
\[
 \nabla_{||} \delta \eta \coloneqq \nabla \delta \eta - \v{n} \cdot \nabla \delta \eta \v{n}
=\delta \eta_X \v{a}+\delta \eta_Y \v{b}
\]
is the surface gradient of $\delta \eta$, and
\[
\v{a}=\frac{\widetilde{\v{S}_Y}\times \v{n}}{|\widetilde{\v{S}_X}\times \widetilde{\v{S}_Y}|}, \quad
\v{b}=\frac{ \v{n}\times \widetilde{\v{S}_X}}{|\widetilde{\v{S}_X}\times \widetilde{\v{S}_Y}|}
\]
are the dual vectors of $\widetilde{\v{S}_X}$, $\widetilde{\v{S}_Y}$ in the tangent plane to $\surf_{00}$.
Here the notation $\delta \eta_X$ ($\delta \eta_Y$) should be interpreted as a directional derivative in the direction $\widetilde{\v{S}_X}$ ($\widetilde{\v{S}_Y}$).
Using the identities
\[
\begin{split}
& \Sx \cdot (\Sx \times \nn) = \Sy \cdot (\nn \times \Sy) = 0, \\
& \Sy \cdot (\Sx \times \nn) = \Sx \cdot (\nn \times \Sy) =  - |\Sx \times \Sy|,
\end{split}
\]
we can compute
\[
\begin{split}
&\widetilde{\Sx} \cdot \v{j}= -\widetilde{\delta \Sx}\cdot \v{n}, \\
&\widetilde{\Sy}  \cdot \v{j}=  -\widetilde{\delta \Sy} \cdot \v{n}.
\end{split}
\]
On the other hand, differentiating the identity $\delta \eta=\widetilde{\delta \v{S}}\cdot \v{n}$, we find that
\[
\begin{split}
&\delta \eta_X=\widetilde{\delta \Sx} \cdot \v{n} + \widetilde{\delta \v{S}} \cdot \v{n}_X,\\
&\delta \eta_Y=\widetilde{\delta \Sy} \cdot \v{n} + \widetilde{\delta \v{S}} \cdot \v{n}_Y.
\end{split}
\]
Combining these identities with the relation
\[
\v{j}_{||}=(\v{j} \cdot \widetilde{\v{S}_X})\v{a}+(\v{j} \cdot \widetilde{\v{S}_Y})\v{b}
\]
we obtain \eqref{j tangential}.
Now because the vector field $\AA \times (\curl{\AA})$ is tangent to the surface, we can integrate by parts (see \cite{Brand} for details about surface gradient and surface divergence operators), obtaining
\[
\begin{split}
 \int_{\surf_{00}} (\AA \times (\curl{\AA})) \cdot \nabla_{||}\delta \eta  \, \D S
 & = - \int_{\surf_{00}} \nabla_{||} \cdot [\AA \times (\curl{\AA})] \delta \eta  \, \D S \\
	 & = - \int_{\surf_{00}} \div  [\AA \times (\curl{\AA})] \delta \eta  \, \D S \\
	 &\quad + \int_{\surf_{00}} \frac{\partial [\AA \times (\curl{\AA})] }{\partial \nn} \cdot \nn \, \delta \eta  \, \D S.
\end{split}
\]
Using the identity
\[
\div (\AA \times \v{B}) = (\curl \AA) \cdot \v{B} - \AA \cdot (\curl \v{B})
\]
combined with \eqref{B2}, \eqref{B3} and \eqref{cond},
we obtain
\[
\int_{\surf_{00}} \div [\AA \times (\curl{\AA})] \delta \eta \, \D S = \int_{\surf_{00}} |\curl{\AA}|^2 \ \delta \eta  \, \D S,
\]
while
\[
\begin{split}
\int_{\surf_{00}} \frac{\partial [\AA \times (\curl{\AA})] }{\partial \nn} \cdot \nn \ \delta \eta  \, \D S & = \int_{\surf_{00}} \left[\frac{\partial \AA}{\partial \nn} \times (\curl{\AA}) \right] \cdot \nn \ \delta \eta  \, \D S\\
& \ \ \ + \int_{\surf_{00}} \left[ \AA \times \frac{\partial (\curl{\AA})}{\partial \nn} \right] \cdot \nn \ \delta \eta  \, \D S.
\end{split}
\]
The second integral here is zero because of \eqref{cond}.
It follows that
\[
\begin{split}
I_2=&- \int_{\surf_{00}} |\curl{\AA}|^2 \ \delta \eta  \, \D S
+\int_{\surf_{00}} \left[\frac{\partial \AA}{\partial \nn} \times (\curl{\AA}) \right] \cdot \nn \ \delta \eta  \, \D S\\
&- \int_{\surf_{00}} (\AA \times (\curl{\AA})) \cdot ((\widetilde{\delta \v{S}}\cdot \v{n}_X)\v{a}+ (\widetilde{\delta \v{S}}\cdot \v{n}_Y)\v{b})  \, \D S.
\end{split}
\]
We have
\[
\begin{split}
I_1=&- \int_{\surf_{00}} \left( \frac{\partial \AA}{\partial \v{n}} \times (\curl{\AA}) \right) \cdot \v{n} \, \delta \eta  \, \D S\\
&
-\int_{\surf_{00}} \left( (\textrm{D}\AA\,  \widetilde{\delta \v{S}}_{||}) \times (\curl{\AA}) \right) \cdot \v{n} \, \D S
\end{split}
\]
where $\widetilde{\delta \v{S}}_{||} =\widetilde{\delta \v{S}}-\delta \eta \v{n}$ is the tangential part of $\widetilde{\delta \v{S}}$.
Differentiating \eqref{cond} in a tangential direction $\v{v}$, we obtain
\[
(\textrm{D}\AA \, \v{v}) \times \v{n}=-\AA\times (\textrm{D}\v{n} \, \v{v}).
\]
We can therefore rewrite $I_1$ yet again as
\[
\begin{split}
I_1=& -\int_{\surf_{00}} \left( \frac{\partial \AA}{\partial \v{n}} \times (\curl{\AA}) \right) \cdot \v{n} \, \delta \eta \, \D S\\
&
+\int_{\surf_{00}} (\AA \times (\curl{\AA}))\cdot (\textrm{D} \v{n}\, \widetilde{\delta \v{S}}_{||}) \, \D S.
\end{split}
\]
Noting that $\AA\times (\curl{\AA})$ is tangent to $\partial \Omega_{00}^t$, we can rewrite the last integral
using the tangential part of $\textrm{D} \v{n}\, \widetilde{\delta \v{S}}_{||}$. By the symmetry of the shape operator, we find that
\[
\textrm{D} \v{n}\, \widetilde{\delta \v{S}}_{||} \cdot \widetilde{\Sx}=
 \widetilde{\delta \v{S}}_{||} \cdot \textrm{D} \v{n}\,\widetilde{\Sx}
 =\widetilde{\delta \v{S}} \cdot \v{n}_X
\]
and similarly
\[
\textrm{D} \v{n}\, \widetilde{\delta \v{S}}_{||} \cdot \widetilde{\Sy}
 =\widetilde{\delta \v{S}} \cdot \v{n}_Y.
\]
Hence,
\[
\begin{split}
I_1=& -\int_{\surf_{00}} \left( \frac{\partial \AA}{\partial \v{n}} \times (\curl{\AA}) \right) \cdot \v{n} \, \delta \eta \, \D S\\
&
+\int_{\surf_{00}} (\AA \times (\curl{\AA})) \cdot ((\widetilde{\delta \v{S}}\cdot \v{n}_X)\v{a}+ (\widetilde{\delta \v{S}}\cdot \v{n}_Y)\v{b})  \, \D S.
\end{split}
\]
 Finally, combining the calculations for $I_1$ and $I_2$, we conclude that
\[
\delta{\cal J}(\AA, \Omega_{00})[\delta \AA,\delta \eta] = - \int_{\surf_{00}} \left[ |\curl \AA|^2 + 2gz - 2\sigma  K_M + \mu \right] \delta \eta \, \D S = 0.
\]
Since we can take $\delta \eta$ to be an arbitrary smooth periodic function by Lemma \ref{Teq}, we recover the boundary condition \eqref{B4}.

Conversely, if $\AA$ is a vector potential of a steady Beltrami flow in $\Omega$ satisfying \eqref{B3} and \eqref{B4} one finds by the above formulas that $(\AA, \Omega)$ is a critical point of $\mathcal J$.
This finishes the proof of the theorem.
\end{proof}

\begin{remark}
In the theorem we neither assume nor obtain that the vector potential $\AA$ is divergence-free. However, this can easily be arranged by subtracting from $\AA$ the gradient of a function $\varphi$ satisfying the Poisson equation $\Delta \varphi=\div \AA$ with homogeneous Dirichlet condition on $\partial \Omega$.
One may also impose the condition that $\AA$ is divergence-free directly in the variational formulation. Indeed, after arriving at \eqref{interior variations} one finds that
$\curl(\curl \AA)-\alpha \curl \AA$ is the gradient of some function $\varphi \in C^{2,\alpha}(\Omega)$ with $\varphi=0$ on $\partial \Omega$. But then $\Delta \varphi =0$, so $\varphi$ vanishes. Alternatively, one can add a term $\int_{\Omega_{00}} |\div \AA|^2 \, \D V$ to the energy functional without changing the class of admissible vector fields.
\end{remark}

We have stated the variational principle in terms of diffeomorphisms $\v{F}$ from $\RR^3$ to itself and vector fields $\AA$ on $\RR^3$ in order to simplify  the presentation. Alternatively, we could  have stated it in terms of diffeomorphisms $\v{F}\colon \overline{D} \to \overline{\Omega}$ and
vector fields $\op{\AA}$ on $D$, with $\AA=T_{\v{F}} \op{\AA}$, in the notation of Lemma \ref{Teq}.
In other words, we replace (F1) by the condition
\begin{itemize}
	\item[(F1')] $\v{F}\colon \overline{D} \to \v{F}(\overline D)$ is a diffeomorphism with bounded partial derivatives and $\det{\textrm{D}\v{F}} > 0$ on $\overline{D}$.
\end{itemize}
Since $\op{\AA}$ is defined on the fixed domain $D$ there is then no need to extend the vector field outside of $\Omega$ (or $D$). The variational formulation can then be given in terms of the
functional
\[
\tilde{\mathcal J}(\op{\AA}, \v{F})=\mathcal J(T_{\v{F}} \op{\AA}, \v{F}(D))
\]
where $(\op{\AA}, \v{F})$ belongs to an affine subspace of $C^{2, \gamma}_{per}(D; \RR^3) \times C^{3, \gamma}_{loc}(D; \RR^3)$ defined by (F2), the periodicity condition on $\v{F}$ and the boundary conditions \eqref{condhat} and \eqref{condbothat}. Thus, critical points can simply be interpreted using the G\^ateaux (or Fr\'echet) derivative.
The formulas for the derivatives of the functionals are more complicated when expressed in terms of $\op{\AA}$ and $\delta \op{\AA}$. However, after a change of variables one may express them as in the proof of Theorem \ref{MainTh} without extending the vector fields outside the domain if
$\delta \AA$ is defined by
\[
\delta \AA=T_{\v{F}} \delta \op{\AA}+D_{\v{F}} T_{\v{F}} [\delta F] \op{\AA},
\]
where the last term is interpreted formally as if $\v{F}$ were a diffeomorphism on $\RR^3$ and $\op{\AA}$  defined on $\RR^3$.
This framework also gives a simple way of interpreting critical points of $\mathcal J$ as
critical points of $\mathcal E$ under the constraints that $\mathcal K$ and $\mathcal M$ are fixed ($\alpha$ and $\mu$ being the corresponding Lagrange multipliers) by a straightforward appeal to the implicit function theorem.
A third approach is to first consider variations of $\AA$ for a fixed $\Omega$ and then consider variations of $\Omega$ where $\AA=\AA_\Omega$ is a partial critical point; see  e.g.~\cite{BT, DZ}.
Finally, let us mention that in all of these approaches, one should really consider equivalence classes of parametrizations $\v{F}$. We ignore this point here to simplify the presentation, but it is in principle straightforward to take it into account; see e.g~\cite[Chapter 3]{DZ}.

\bigskip

\noindent {\bf Acknowledgments}. This project has received funding from the European Research Council (ERC) under the European Union's Horizon 2020 research and innovation programme (grant agreement no 678698).

\bibliographystyle{siam}
\bibliography{LokharuWahlen}

\begin{thebibliography}{10}

\bibitem{AR}
{\sc B.~F. Akers and J.~A. Reeger}, {\em Three-dimensional overturned traveling
  water waves}, Wave Motion, 68 (2017), pp.~210 -- 217.

\bibitem{A}
{\sc H.-D. Alber}, {\em Existence of three-dimensional, steady, inviscid,
  incompressible flows with nonvanishing vorticity}, Math. Ann., 292 (1992),
  pp.~493--528.

\bibitem{AK}
{\sc V.~I. Arnold and B.~A. Khesin}, {\em Topological methods in
  hydrodynamics}, vol.~125 of Applied Mathematical Sciences, Springer-Verlag,
  New York, 1998.

\bibitem{BA}
{\sc T.~Z. Boulmezaoud and T.~Amari}, {\em On the existence of non-linear
  force-free fields in three-dimensional domains}, Z. Angew. Math. Phys., 51
  (2000), pp.~942--967.

\bibitem{Brand}
{\sc L.~Brand}, {\em Vector and {T}ensor {A}nalysis}, John Wiley \& Sons, Inc.,
  New York; Chapman \& Hall Ltd., London, 1947.

\bibitem{BB}
{\sc B.~Buffoni and G.~R. Burton}, {\em On the stability of travelling waves
  with vorticity obtained by minimization}, NoDEA Nonlinear Differential
  Equations Appl., 20 (2013), pp.~1597--1629.

\bibitem{BGSW}
{\sc B.~Buffoni, M.~D. Groves, S.~M. Sun, and E.~Wahl{\'e}n}, {\em Existence
  and conditional energetic stability of three-dimensional fully localised
  solitary gravity-capillary water waves}, J. Differential Equations, 254
  (2013), pp.~1006--1096.

\bibitem{BGW}
{\sc B.~Buffoni, M.~D. Groves, and E.~Wahl{\'e}n}, {\em A variational reduction
  and the existence of a fully localised solitary wave for the
  three-dimensional water-wave problem with weak surface tension}, Arch.
  Rational Mech. Anal., 228 (2018), pp.~773--820.

\bibitem{BW}
{\sc B.~Buffoni and E.~Wahl{\'e}n}, {\em Steady three-dimensional rotational
  flows: an approach via two stream functions and {Nash}--{Moser} iteration}.
\newblock Preprint: {\tt https://arxiv.org/abs/1709.05957}, 2017.

\bibitem{BT}
{\sc G.~R. Burton and J.~F. Toland}, {\em Surface waves on steady perfect-fluid
  flows with vorticity}, Comm. Pure Appl. Math., 64 (2011), pp.~975--1007.

\bibitem{CW}
{\sc S.~Chandrasekhar and L.~Woltjer}, {\em On force-free magnetic fields},
  Proc. Nat. Acad. Sci. U.S.A., 44 (1958), pp.~285--289.

\bibitem{CSS}
{\sc A.~Constantin, D.~Sattinger, and W.~Strauss}, {\em Variational
  formulations for steady water waves with vorticity}, J. Fluid Mech., 548
  (2006), pp.~151--163.

\bibitem{CN}
{\sc W.~Craig and D.~P. Nicholls}, {\em Travelling two and three dimensional
  capillary gravity water waves}, SIAM J. Math. Anal., 32 (2000), pp.~323--359.

\bibitem{DZ}
{\sc M.~C. Delfour and J.-P. Zol\'esio}, {\em Shapes and geometries}, vol.~22
  of Advances in Design and Control, Society for Industrial and Applied
  Mathematics (SIAM), Philadelphia, PA, second~ed., 2011.
\newblock Metrics, analysis, differential calculus, and optimization.

\bibitem{EPS1}
{\sc A.~Enciso and D.~Peralta-Salas}, {\em Knots and links in steady solutions
  of the {E}uler equation}, Ann. of Math. (2), 175 (2012), pp.~345--367.

\bibitem{EPS2}
\leavevmode\vrule height 2pt depth -1.6pt width 23pt, {\em Existence of knotted
  vortex tubes in steady {E}uler flows}, Acta Math., 214 (2015), pp.~61--134.

\bibitem{EPS3}
\leavevmode\vrule height 2pt depth -1.6pt width 23pt, {\em Beltrami fields with
  a nonconstant proportionality factor are rare}, Arch. Ration. Mech. Anal.,
  220 (2016), pp.~243--260.

\bibitem{Freidberg}
{\sc J.~P. Freidberg}, {\em Ideal MHD}, Cambridge University Press, 2014.

\bibitem{GT}
{\sc D.~Gilbarg and N.~S. Trudinger}, {\em Elliptic partial differential
  equations of second order}, Classics in Mathematics, Springer-Verlag, Berlin,
  2001.
\newblock Reprint of the 1998 edition.

\bibitem{GH}
{\sc M.~D. Groves and M.~Haragus}, {\em A bifurcation theory for
  three-dimensional oblique travelling gravity-capillary water waves}, J.
  Nonlinear Sci., 13 (2003), pp.~397--447.

\bibitem{GM}
{\sc M.~D. Groves and A.~Mielke}, {\em A spatial dynamics approach to
  three-dimensional gravity-capillary steady water waves}, Proc. Roy. Soc.
  Edinburgh Sect. A, 131 (2001), pp.~83--136.

\bibitem{GS}
{\sc M.~D. Groves and S.-M. Sun}, {\em Fully localised solitary-wave solutions
  of the three-dimensional gravity-capillary water-wave problem}, Arch. Ration.
  Mech. Anal., 188 (2008), pp.~1--91.

\bibitem{GrovesWahlen07}
{\sc M.~D. Groves and E.~Wahl{\'e}n}, {\em Spatial dynamics methods for
  solitary gravity-capillary water waves with an arbitrary distribution of
  vorticity}, SIAM J. Math. Anal., 39 (2007), pp.~932--964.

\bibitem{GW1}
{\sc M.~D. Groves and E.~Wahl{\'e}n}, {\em Existence and conditional energetic
  stability of solitary gravity-capillary water waves with constant vorticity},
  Proc. Roy. Soc. Edinburgh Sect. A, 145 (2015), pp.~791--883.

\bibitem{IP1}
{\sc G.~Iooss and P.~Plotnikov}, {\em Asymmetrical three-dimensional travelling
  gravity waves}, Arch. Ration. Mech. Anal., 200 (2011), pp.~789--880.

\bibitem{IP2}
{\sc G.~Iooss and P.~I. Plotnikov}, {\em Small divisor problem in the theory of
  three-dimensional water gravity waves}, Mem. Amer. Math. Soc., 200 (2009),
  pp.~viii+128.

\bibitem{KNW}
{\sc R.~Kaiser, M.~Neudert, and W.~von Wahl}, {\em On the existence of
  force-free magnetic fields with small nonconstant {$\alpha$} in exterior
  domains}, Comm. Math. Phys., 211 (2000), pp.~111--136.

\bibitem{AL}
{\sc P.~Laurence and M.~Avellaneda}, {\em On {W}oltjer's variational principle
  for force-free fields}, J. Math. Phys., 32 (1991), pp.~1240--1253.

\bibitem{L}
{\sc J.~C. Luke}, {\em A variational principle for a fluid with a free
  surface}, J. Fluid Mech., 27 (1967), pp.~395--397.

\bibitem{Priest}
{\sc E.~Priest}, {\em Magnetohydrodynamics of the Sun}, Cambridge University
  Press, 2014.

\bibitem{RS}
{\sc J.~Reeder and M.~Shinbrot}, {\em Three-dimensional, nonlinear wave
  interaction in water of constant depth}, Nonlinear Anal., 5 (1981),
  pp.~303--323.

\bibitem{TX}
{\sc C.~Tang and Z.~Xin}, {\em Existence of solutions for three dimensional
  stationary incompressible {E}uler equations with nonvanishing vorticity},
  Chin. Ann. Math. Ser. B, 30 (2009), pp.~803--830.

\bibitem{Wahlen07}
{\sc E.~Wahl{\'e}n}, {\em A {H}amiltonian formulation of water waves with
  constant vorticity}, Lett. Math. Phys., 79 (2007), pp.~303--315.

\bibitem{Wahlen14}
{\sc E.~Wahl{\'e}n}, {\em Non-existence of three-dimensional travelling water
  waves with constant non-zero vorticity}, J. Fluid Mech., 746 (2014).

\bibitem{W1}
{\sc L.~Woltjer}, {\em A theorem on force-free magnetic fields}, Proc. Nat.
  Acad. Sci. U.S.A., 44 (1958), pp.~489--491.

\end{thebibliography}

\end{document}